\newtheorem{theorem}{Theorem}[section]
\newtheorem{proposition}[theorem]{Proposition}
\newtheorem{problem}[theorem]{Problem}
\newtheorem{rem}[theorem]{Remark}
\newtheorem{ex}[theorem]{Example}
\newenvironment{proof}[1][Proof]{\begin{trivlist}
\item[\hskip \labelsep {\bfseries #1}]}{\end{trivlist}}
\begin{document}
%
\title{Memoryless Control Design for \\ Persistent Surveillance under Safety Constraints }

\author{\IEEEauthorblockN{Eduardo Arvelo, Eric Kim and Nuno C. Martins}
\IEEEauthorblockA{Department of Electrical and Computer Engineering and 
Maryland Robotics Center\\
Univeristy of Maryland,College Park, MD 20742\\
Email: \{earvelo, eskim727, nmartins\}@umd.edu } 

}

\maketitle

\begin{abstract}
This paper deals with the design of time-invariant memoryless control policies for robots that move in a finite two-dimensional lattice and are tasked with persistent surveillance of an area in which there are forbidden regions. We model each robot as a controlled Markov chain whose state comprises its position in the lattice and the direction of motion. The goal is to find the minimum number of robots and an associated time-invariant memoryless control policy that guarantees that the largest number of states are persistently surveilled without ever visiting a forbidden state. We propose a design method that relies on a finitely parametrized convex program inspired by entropy maximization principles. Numerical examples are provided.
\end{abstract}

\IEEEpeerreviewmaketitle

\section{Introduction}

We develop a method to design memoryless controllers for robots that move in a finite two-dimensional lattice. The goal is to achieve persistent surveillance. The term ``persistent surveillance" is used to denote the task of \emph{continuously} visiting the largest possible set of points in the lattice. In our setup, we also impose safety constraints that dictate that certain regions are \emph{forbidden}. The forbidden regions may represent areas in which robots cannot operate (such as bodies of water) or are not allowed to visit (such as restricted airspace). The goal is to deploy the minimum number of robots equipped with a  control policy that guarantees persistent surveillance of the \emph{largest} possible set of lattice points without ever visiting a forbidden region. The memoryless strategies proposed here are applicable to miniature robots that have severe computational constraints.

\vspace{1mm}
The concept of persistent survaillance is similar to the concept of coverage \cite{Choset:2001uo}, but differs from it in that the area to be surveilled must be revisited infinitely many times. Control design for persistent surveillance has been studied in \cite{Nigam:2012hv, Nigam:2008hh}, where a semi-heuristic control policy that minimizes the time between visitations to the same region is proposed, and in \cite{Hokayem:2007cx}, which proposes an algorithm for persistent surveillance of a convex polygon in the plane. These approaches, however, are not restricted to memoryless policies and do not consider safety constraints.  On the implementation front, system architectures for unmanned aerial vehicles have been designed specifically for persistent surveillance purposes \cite{Bethke:2010ba, Michael:2011bx}.

\vspace{1mm}
In this paper, we model each robot as a fully-observed controlled Markov chain with finite state and control spaces. This approach, which has been successfully used in the context of navigation and path planning (such as in \cite{Simmons:1995td,Undurti:2011dn,Theocharous:2002cx,Burlet:2004kz}), allows for the development of robust and highly scalable algorithms. Without loss of generality, we consider robots whose state is taken as its position on a finite two-dimensional lattice and direction of motion (taken from a set of four possible orientations), and limit the control space to two control actions (``forward" and ``turn right"). The limitation in the control space illustrates how constrained actuation can be incorporated in our formulation. It is important to highlight, however, that the ideas described in this paper can be extended to more general dynamics and state/control spaces. 

\vspace{1mm}
We use a recent result in \cite{earvelo} to compute the largest set of states that can be persistently surveilled under safety constraints, and an associated memoryless control policy. The proposed solution relies on a finitely parametrized convex program, which is highly scalable and can be efficiently solved by standard convex optimization tools, such as \cite{cvx}. The approach is based on the fact that the probability mass function that maximizes the entropy under convex constraints has maximal support \cite{Csiszar:2004vd}. We also show that the minimum number of robots needed to perform persistent surveillance of the largest set of states (without ever violating the safety constraint) is equal to the number of recurrent classes of the closed loop Markov chain under the control policy computed by the proposed convex program. The recurrent classes can be found by traversing the graph of the closed loop Markov chain.

\vspace{1mm}
The remainder of this paper is organized as follows. Section \ref{sec:prelim} provides notation, basic definitions and the problem statement. The convex program that computes the maximal set of persistently surveilled states and its associated control policy is presented in Section \ref{sec:recurrent}. Section \ref{sec:deploy} provides details on computing the smallest deployment of robots necessary for maximal persistently surveillance. We discuss limiting behavior and use of additional constraints in Section \ref{sec:limiting}. Conclusions are provided in Section~\ref{sec:conc}. Numerical examples are given throughout the paper to illustrate concepts and the proposed methodology.

\section{Preliminaries and Problem Statements}\label{sec:prelim}

The following notation is used throughout the paper: 
\vspace{.2 cm}

\begin{tabular}{l l}
	$\mathbb{X} \times  \mathbb{Y} $ & set of lattice positions\\
	$\mathbb{O}$ & set of orientations\\
	$\mathbb{S}\overset{def}{=}\mathbb{X} \times  \mathbb{Y}\times \mathbb{O}$ & set of robot states\\
	$\mathbb{F} \subset \mathbb{S}$ &set of forbidden states\\
	$\mathbb{U}$ & set of control actions \\ 
	$S_k$ & state of the robot at time $k$\\
	$U_k$ & control action at time $k$\\
\end{tabular}
\vspace{.2cm}

The state of the robot will be graphically represented as shown in Fig. \ref{fig:state}.
\begin{figure}[!ht]
  \captionsetup[subfigure]{labelformat=empty}
  \centering
  \subfloat[][${~~~~~~~(1,2,U)}$]{
	\begin{tikzpicture}[inner sep=1mm, xscale = 0.8, yscale = 0.8]
	\draw[step=1cm, very thin] (0,0) grid (3,2);
	\draw[fill=black!20] (0.5,0.5)--(0,1)--(1,1)--cycle;
	\draw[thin] [->] (0.5, 0.5) -- (0.5,0.95);
	
	\node at (0.5, 2.23) [draw=none,fill=none] {$1$};
	\node at (1.5, 2.23) [draw=none,fill=none] {$2$};
	\node at (2.5, 2.23) [draw=none,fill=none] {$3$};
	\node at (-0.2, 0.5) [draw=none,fill=none] {$2$};
	\node at (-0.2, 1.5) [draw=none,fill=none] {$1$};
\end{tikzpicture}}\quad
  \subfloat[][${(3,1,R)}$]{
	\begin{tikzpicture}[inner sep=2mm, xscale = 0.8, yscale = 0.8]
	\draw[step=1cm, very thin] (0,0) grid (3,2);
	\draw[fill=black!20] (2.5,1.5)--(3,1)--(3,2)--cycle;
	\draw[thin] [->] (2.5, 1.5) -- (2.95,1.5);	
	\node at (0.5, 2.23) [draw=none,fill=none] {$1$};
	\node at (1.5, 2.23) [draw=none,fill=none] {$2$};
	\node at (2.5, 2.23) [draw=none,fill=none] {$3$};
\end{tikzpicture}}
  \caption{Graphical representation of the state of the robot. In this examples, we use $\mathbb{X} = \{1, 2, 3\}$, $\mathbb{Y} = \{1, 2\}$, and $\mathbb{O} = \{R, U, L, D\}$, where $R, U, L \text{ and }D$ represent right, up, left and down directions, respectively. }
  \label{fig:state}
\end{figure}
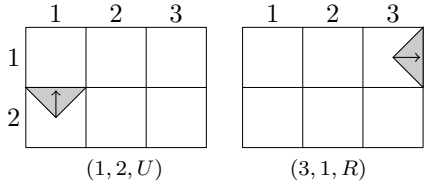

The robot's dynamics are governed by the (conditional) probability of $S_{k+1}$ given the current state $S_k$ and control action $U_k$, and are denoted as:
\begin{align*}
	\mathcal{Q}(s^+&, s, u)\overset{def}{=}P(S_{k+1} = s^+~\big|~ S_k = s, U_k = u),
\end{align*}
where $s , s^+ \in \mathbb{S}, u \in \mathbb{U}$.

We denote any memoryless control policy by
\begin{align*}
	\mathcal{K}(u, s) \overset{def}{=}P(U_k = u~\big|~S_k = s), \qquad u \in \mathbb{U}, s \in \mathbb{S},
\end{align*}
where $\sum_{u \in\mathbb{U}}\mathcal{K}(u,s)=1$ for all $s$ in $\mathbb{S}$. The set of all such policies is denoted as $\mathbb{K}$. Note that the computation of a control action may be deterministic (when $\mathcal{K}(u,s)=1$ for a given action $u$) or carried out in a randomized manner, in which case the policy dictates the probabilities assigned to each control action for a given state.
\vspace{1mm}
\paragraph*{Assumptions}\
\begin{itemize}
\item Throughout the paper we assume that $\mathcal{Q}$ is given. Hence, all quantities and sets that depend on the closed loop behavior may be indexed only by the underlying control policy $\mathcal{K}$.
\item When multiple robots are considered, we assume that they are identical and have dynamics governed by  $\mathcal{Q}$. In these situations, every robot executes the same control policy. Moreover, multiple robots are allowed to occupy the same position. 
\end{itemize}

Given a control policy $\mathcal{K}$, the conditional state transition probability of the closed loop is represented as:
\begin{align*}
	\mathcal{P}_{\mathcal{K}}(S_{k+1}=s^+ \big| S_k=s) \overset{def}{=}  \sum_{u\in\mathbb{U}}\mathcal{Q}(s^+,s,u)\mathcal{K}(u,s). 
\end{align*}

We will also refer to this quantity as $\mathcal{Q}_{\mathcal{K}}(s^+, s) \overset{def}{=} \mathcal{P}_{\mathcal{K}}(S_{k+1}=s^+ \big| S_k=s)$.

\subsection{Recurrence and Persistent Surveillance}

A state $s \in \mathbb{S}$ is \emph{recurrent} under a control policy $\mathcal{K}$ if the probability of a robot revisiting state $s$ is one, that is:
\begin{align}\label{eqn:pr}
P_{\mathcal{K}} (S_{k} = s \text{ for some } k>0~\big|~S_0 = s)=1.
\end{align} 

We define the \emph{set of recurrent states} $\mathbb{S}^{R}_{\mathcal{K}}$ under control policy $\mathcal{K}$ as follows:
\begin{align*}
	&\mathbb{S}^{R}_{\mathcal{K}} \overset{def}{=}
	\Big\{s \in \mathbb{S}: (\ref{eqn:pr})~ holds \Big\}.
\end{align*}

\begin{rem}\label{rem1} Membership in $\mathbb{S}^{R}_{\mathcal{K}}$ guarantees that once a state is visited, it will be revisited infinitely many times under control policy $\mathcal{K}$. It does not, however, guarantee that each state in $\mathbb{S}^{R}_{\mathcal{K}}$ will be visited for all initial states in $\mathbb{S}^{R}_{\mathcal{K}}$ because $\mathbb{S}^{R}_{\mathcal{K}}$ may contain multiple recurrent classes. In fact, a robot will visit a certain recurrent state $s$ with probability one if and only if it is initialized in the same recurrent class. Moreover, note that once a robot enters a recurrent class, it will never exit under control policy $\mathcal{K}$.\\
\end{rem}

We say a state $s$ is \emph{persistently surveilled} under control policy $\mathcal{K}$ and initial state $s_0 \in \mathbb{S}$ if it is recurrent under $\mathcal{K}$ and 
\begin{align}\label{eqn:ps}
P_{\mathcal{K}} (S_{k} = s \text{ for some } k>0~\big|~S_0 = s_0)=1.
\end{align} 

If a state $s$ is persistently surveilled under control policy $\mathcal{K}$ and initial state $s_0 \in \mathbb{S}^{R}_{\mathcal{K}}$, then it must be that $s$ and $s_0$ belong to the same recurrent class. 
 
We define the \emph{set of persistently surveilled states $\mathbb{S}^{ps}_{s_0,\mathcal{K}}$} under control policy $\mathcal{K}$ and initial state $s_0 \in \mathbb{S}$ to be:
\begin{align*}
	&\mathbb{S}^{ps}_{s_0,\mathcal{K}}\overset{def}{=}\
	\Big\{s \in \mathbb{S}^{R}_{\mathcal{K}}: (\ref{eqn:ps})~ holds \Big\}.
\end{align*}

The set $\mathbb{S}^{ps}_{s_0,\mathcal{K}}$ is a recurrent class of the closed loop dynamics $\mathcal{Q}_\mathcal{K}$. Note that for every state $s$ in $\mathbb{S}^{ps}_{s_0,\mathcal{K}}$, it holds that $\mathbb{S}^{ps}_{s,\mathcal{K}} = \mathbb{S}^{ps}_{s_0,\mathcal{K}}$. Moreover, if there exists a recurrent state for which $\mathbb{S}^{ps}_{s_0,\mathcal{K}} = \mathbb{S}^{R}_{\mathcal{K}}$,  the set $\mathbb{S}^{R}_{\mathcal{K}}$ has \emph{only one} recurrent class.\\

Given a set $\mathbb{F}$ of forbidden states, we define the set of states that are recurrent and for which the probability of transitioning into $\mathbb{F}$ is zero. 

The \emph{set of $\mathbb{F}$-safe recurrent states} $\mathbb{S}^{R}_{\mathcal{K},\mathbb{F}}$ under a control policy $\mathcal{K}$ is defined as: 
\begin{align*}
	\mathbb{S}^{R}_{\mathcal{K},\mathbb{F}}&\overset{def}{=} \Big\{ s \in \mathbb{S}^{R}_{\mathcal{K}}\text{ : } \mathcal{Q}_{\mathcal{K}} (s^+, s) = 0 \text{, } s^+ \in \mathbb{F} \Big\}.
\end{align*}

We define the \emph{{\bf maximal} set of $\mathbb{F}$-safe recurrent states} as:
\begin{align*}
	\mathbb{S}^{R}_{\mathbb{F}} \overset{def}{=} \bigcup_{\mathcal{K} \in \mathbb{K} } \mathbb{S}^{R}_{\mathcal{K},\mathbb{F}}.
\end{align*}

Finally, the \emph{set of $\mathbb{F}$-safe persistently surveilled states} $\mathbb{S}^{ps}_{s_0,\mathcal{K},\mathbb{F}}$ under a control policy $\mathcal{K}$ and initial state $s_0\in\mathbb{S}$ is defined as: 
\begin{align*}
	\mathbb{S}^{ps}_{s_0,\mathcal{K},\mathbb{F}}&\overset{def}{=} \Big\{ s \in \mathbb{S}^{ps}_{s_0,\mathcal{K}}\text{ : } \mathcal{Q}_{\mathcal{K}} (s^+, s) = 0 \text{, } s^+ \in \mathbb{F} \Big\}.
\end{align*}

\begin{rem}\label{rem2} As before, $\mathbb{S}^{ps}_{s_0,\mathcal{K},\mathbb{F}}$ is a (safe) recurrent class of $\mathcal{Q}_\mathcal{K}$.
\end{rem}
\subsection{Problem Statement}

We start by addressing the following problem:
\begin{problem}\label{prob1}(Maximal set of $\mathbb{F}$-safe recurrent states).
Given a set of forbidden states $\mathbb{F}$, determine:
\begin{enumerate}[(a)]
\item $\mathbb{S}^{R}_{\mathbb{F}}$; and
\item  a control policy $\mathcal{K}^*$ such that $\mathbb{S}^{R}_{\mathcal{K}^{*}} = \mathbb{S}^{R}_{\mathbb{F}}$. 
\end{enumerate}
\end{problem} 

In light of Remark \ref{rem1}, note that in order to persistently surveil all possible states, we need to determine how many robots to use and in which state they should be initialized. The following problem addresses this issue.
\begin{problem}\label{prob2} (Maximal $\mathbb{F}$-safe persistent surveillance). Given a set of forbidden states $\mathbb{F}$, determine the \emph{minimum} number of robots $r$, a control policy $\hat{\mathcal{K}}$ and a set of initial states $\{s^1, ..., s^r\}$, $r $, so that
\begin{align}
\bigcup_{i=1}^r\mathbb{S}^{ps}_{s^i,\hat{\mathcal{K}},\mathbb{F}} = \mathbb{S}^{R}_{\mathbb{F}}.
\end{align}
\end{problem}

\begin{rem} The following is a list of important comments on Problems \ref{prob1} and \ref{prob2}.
\begin{itemize}
\item There is no $\mathcal{K}$ such that the states in $\mathbb{S} \diagdown \mathbb{S}_{\mathbb{F}}^{R}$ can be $\mathbb{F}$-safe and recurrent
\item Once $r$ robots are initialized with initial states $\{s^1, ..., s^r\}$, it is guaranteed that the largest possible set of states will be visited infinitely many times without ever visiting a forbidden state.
\end{itemize}
\end{rem}

We will propose a convex optimization problem that efficiently computes $\mathbb{S}^{R}_{\mathbb{F}}$ and a control policy $\mathcal{K}^*$ such that $\mathbb{S}^{R}_{\mathcal{K}^{*}} = \mathbb{S}^{R}_{\mathbb{F}}$. We will show that the minimum number of robots $r$ required to persistently surveil $\mathbb{S}^{R}_{\mathbb{F}}$ is the number of distinct recurrent classes of the closed loop Markov chain under the computed control policy $\mathcal{K}^*$.

\section{Computing the Maximal Set of Recurrent States: A Convex Approach}\label{sec:recurrent}

Let $\mathbb{P}_{\mathbb{SU}}$ be the set of all probability mass functions (pmfs) with support in $\mathbb{S} \times \mathbb{U}$, and consider the following convex optimization program:
\begin{align}
	&f^*_{SU} =  \arg \max_{f_{SU} \in \mathbb{P}_\mathbb{SU}} \mathcal{H}(f_{SU})\label{eqn:opt1a}\\
	&\text{subject to: } \nonumber\\
	&\sum_{u^+ \in\mathbb{U}}f_{SU}(s^+,u^+) = \sum_{s \in \mathbb{S}, u\in\mathbb{U}}\mathcal{Q}(s^+, s, u)f_{SU}(s,u) \label{eqn:opt1b}\\ 
	&\sum_{u \in \mathbb{U}} f_{SU}(s,u) = 0, ~~ s \in \mathbb{F}\label{eqn:opt1c}
\end{align}

where $\mathcal{H}: \mathbb{P}_{\mathbb{S}\mathbb{U}} \rightarrow \Re_{\geq 0}$ is the entropy of $f_{SU}$, and is given by
\begin{align*}
	\mathcal{H}(f_{SU}) = -\sum_{u \in \mathbb{U}}\sum_{s \in \mathbb{S}}f_{SU}(s,u)\ln\big(f_{SU}(s,u)\big)
\end{align*} where we adopt the standard convention that $0 \ln ( 0 ) = 0$.

The following proposition, which has been modified from Theorem 3.1 in \cite{earvelo}, provides a solution to Problem~\ref{prob1}.

\vspace{1mm}
\begin{proposition}\label{prop1}
Let $\mathbb{F}$ be given, assume that (\ref{eqn:opt1a})-(\ref{eqn:opt1c}) is feasible, and that $f^*_{SU}$ is the optimal solution. In addition, adopt the marginal pmf $f^*_S(s) = \sum_{u \in \mathbb{U}} f^*_{SU}(s,u)$
and consider that $\mathcal{G}: \mathbb{U} \times \mathbb{S} \rightarrow [0,1]$ is any 
function satisfying $\sum_{u \in \mathbb{U}} \mathcal{G} (u,s)=1$ for all $s$ in $\mathbb{S}$.
The following holds:
\begin{enumerate}[(a)]
	\item $\mathbb{S}^{R}_{\mathbb{F}} = \mathbb{W}_{f^*_S}$\vspace{1mm}
	\item $\mathbb{S}^{R}_{\mathcal{K}^*} = \mathbb{S}^{R}_{\mathbb{F}}$ for $\mathcal{K}^*$ given by:
	\begin{align}\label{eqn:control_opt}
		\!\!\!\!\!\!\!\! \mathcal{K}^*(u,s) =
		\begin{cases}
		 	\frac{f^*_{SU}(s,u)}{f^*_S(s)}, & \!s \in \mathbb{W}_{f^*_S} \\
			\mathcal{G}(u,s),\! &\text{otherwise}
		\end{cases}, \quad (u,s) \in \mathbb{U}\times\mathbb{S}
	\end{align}
\end{enumerate}
where $\mathbb{W}_{f^*_S}$ is the support of $f^*_S$ and is given by $\mathbb{W}_{f^*_S} = \{s\in \mathbb{S} \text{ : }  f^*_S(s)>0\}.$
\end{proposition}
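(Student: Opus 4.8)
The plan is to recast the convex program (\ref{eqn:opt1a})--(\ref{eqn:opt1c}) as a search over stationary distributions of closed-loop Markov chains, and then to combine two ingredients: the classical fact that, for a finite-state chain, the support of a stationary distribution is exactly a union of recurrent classes (and, conversely, every recurrent class carries a stationary distribution supported on it); and the maximum-entropy principle of \cite{Csiszar:2004vd}, namely that the entropy maximizer over a convex set of pmfs has \emph{maximal support}, i.e.\ its support contains the support of every feasible pmf.

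First I would set up the dictionary between feasible points of the program and (policy, stationary distribution) pairs. On one side, given any policy $\mathcal{K}$ and any stationary distribution $g$ of $\mathcal{Q}_{\mathcal{K}}$, the pmf $g_{SU}(s,u)\overset{def}{=}g(s)\mathcal{K}(u,s)$ satisfies the balance constraint (\ref{eqn:opt1b}); on the other side, any $f_{SU}$ satisfying (\ref{eqn:opt1b}) has a marginal $f_{S}$ that is stationary for the chain induced by $\mathcal{K}_{f}(u,s)=f_{SU}(s,u)/f_{S}(s)$ on the support of $f_{S}$. Then I would read (\ref{eqn:opt1c}): since the right-hand side of (\ref{eqn:opt1b}) is a sum of nonnegative terms, $\sum_{u}f_{SU}(s^+,u)=0$ for $s^+\in\mathbb{F}$ forces $\mathcal{Q}(s^+,s,u)f_{SU}(s,u)=0$ for all $(s,u)$, so the induced policy never transitions from the support of $f_{S}$ into $\mathbb{F}$; hence that support is closed under the induced chain and $\mathbb{F}$-safe, i.e.\ a union of $\mathbb{F}$-safe recurrent classes.

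Applying the dictionary to the optimizer $f^*_{SU}$ and to $\mathcal{K}^*$ of (\ref{eqn:control_opt}), $f^*_S$ is stationary for $\mathcal{Q}_{\mathcal{K}^*}$, so $\mathbb{W}_{f^*_S}\subseteq\mathbb{S}^{R}_{\mathcal{K}^*,\mathbb{F}}\subseteq\mathbb{S}^{R}_{\mathbb{F}}$, and in particular $\mathbb{W}_{f^*_S}\subseteq\mathbb{S}^{R}_{\mathcal{K}^*}$. For the opposite inclusion in (a): pick any $\mathcal{K}$ and any $s\in\mathbb{S}^{R}_{\mathcal{K},\mathbb{F}}$, and let $C$ be its recurrent class. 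Since $C$ is closed and $\mathbb{F}$-safety forbids transitions into $\mathbb{F}$ from its states, $C\cap\mathbb{F}=\emptyset$; therefore the stationary distribution carried by $C$, pushed through the dictionary, yields a pmf that is feasible — it satisfies (\ref{eqn:opt1b}) by stationarity and (\ref{eqn:opt1c}) because it vanishes on $\mathbb{F}$. By the maximal-support property of the entropy maximizer, this feasible pmf's support, hence $C$, hence $s$, lies in $\mathbb{W}_{f^*_S}$. This gives $\mathbb{S}^{R}_{\mathbb{F}}\subseteq\mathbb{W}_{f^*_S}$, so $\mathbb{S}^{R}_{\mathbb{F}}=\mathbb{W}_{f^*_S}$, which is (a); together with $\mathbb{W}_{f^*_S}\subseteq\mathbb{S}^{R}_{\mathcal{K}^*}$ it yields one inclusion of (b).

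I expect the main obstacle to be the maximal-support step: one must invoke carefully (from \cite{Csiszar:2004vd}, or Theorem~3.1 of \cite{earvelo}) that the maximizer of $\mathcal{H}$ over the convex set cut out by (\ref{eqn:opt1b})--(\ref{eqn:opt1c}) dominates the support of \emph{every} feasible pmf — the point being that $\frac{d}{dx}(-x\ln x)=+\infty$ at $x=0$, so a feasible pmf placing mass outside the support of $f^*_{SU}$ would let an infinitesimal mixing step strictly increase $\mathcal{H}$, contradicting optimality. The remaining, more delicate point is the reverse inclusion $\mathbb{S}^{R}_{\mathcal{K}^*}\subseteq\mathbb{W}_{f^*_S}$ in (b): a recurrent state of $\mathcal{K}^*$ outside $\mathbb{W}_{f^*_S}$ lies in a recurrent class $C'$ disjoint from the closed set $\mathbb{W}_{f^*_S}$, and if $C'$ is $\mathbb{F}$-safe the previous paragraph forces $C'\subseteq\mathbb{W}_{f^*_S}$, a contradiction; the only escape is a recurrent class meeting $\mathbb{F}$, which the free choice of $\mathcal{G}$ can in principle create, so here I would either restrict $\mathcal{G}$ off $\mathbb{W}_{f^*_S}$ to avoid $\mathbb{F}$ when possible, or read (b) as the statement about $\mathbb{S}^{R}_{\mathcal{K}^*,\mathbb{F}}$ that Problem~\ref{prob1} actually requires.
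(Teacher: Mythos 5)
Your proof is correct and follows essentially the same route the paper points to: the paper omits the argument, deferring to Theorem 3.1 and Lemma 3.5 of \cite{earvelo}, and the ingredients it highlights --- the balance constraint (\ref{eqn:opt1b}) encoding stationarity and hence recurrence, the constraint (\ref{eqn:opt1c}) encoding $\mathbb{F}$-safety, and the maximal-support property of the entropy maximizer --- are exactly the ones your dictionary between feasible pmfs and (policy, stationary distribution) pairs assembles. Your closing caveat about part (b) is also well taken: with an arbitrary $\mathcal{G}$ the closed loop can acquire recurrent classes that meet $\mathbb{F}$ outside $\mathbb{W}_{f^*_S}$ (any $\mathbb{F}$-safe one would be forced into $\mathbb{W}_{f^*_S}$ by part (a)), so the statement that actually holds for every admissible $\mathcal{G}$, and the one Problem~\ref{prob1} needs, is $\mathbb{S}^{R}_{\mathcal{K}^*,\mathbb{F}} = \mathbb{S}^{R}_{\mathbb{F}}$.
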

\vspace{1mm}

\paragraph*{Comments on the proof of Proposition \ref{prop1}} The proof of Proposition \ref{prop1} closely follows the proof of Theorem 3.1 in \cite{earvelo} and is omitted. However, it is important to highlight that constraint (\ref{eqn:opt1b}) enforces recurrence and constraint (\ref{eqn:opt1c}) enforces $\mathbb{F}$-safety. Moreover, note that the pmf that maximizes the entropy under convex constraints has maximal support (see Lemma 3.5 in \cite{earvelo}).

\vspace{2mm}

\begin{ex}\label{example1}
Let $\mathbb{X} = \mathbb{Y}  = \{1,..., 5\}$, $\mathbb{O} = \{R, U, L, D\}$ and consider a robot whose action space is given by $\mathbb{U} = \{``forward", ``turn~right"\}$. Moreover, let the set of forbidden states be given by: $ \mathbb{F} = \big\{(x,y,\theta) \in \mathbb{S} \text{~:~} (x,y)  \in \{(1,1), (1,5), (5,1), (5,5), (3,3)\}\big\}$, which means the robot is prohibited from visiting the center or corner locations of the lattice. 

In order to specify $\mathcal{Q}$, we first define an auxiliary conditional pmf $\mathcal{Q}'$ defined on $\mathbb{X}' = \mathbb{Y}'  = \{1,2,3\}$ and $\mathbb{O} '= \{R, U, L, D\}$. For clarity,  $\mathcal{Q}'$ is shown graphically in Fig. \ref{fig:dyn1}, which contains the probabilities of transitioning to states shown as dark triangles given the previous state shown as a white triangle. There is uncertainty only for transitions that occur on the edge of the lattice. Since we consider dynamics that are spatially invariant, the transition probabilities for states not shown in Fig. \ref{fig:dyn1} can be computed by appropriate manipulation of the ones shown. Similarly, $\mathcal{Q}$ is constructed by appropriate expansion of $\mathcal{Q}'$.

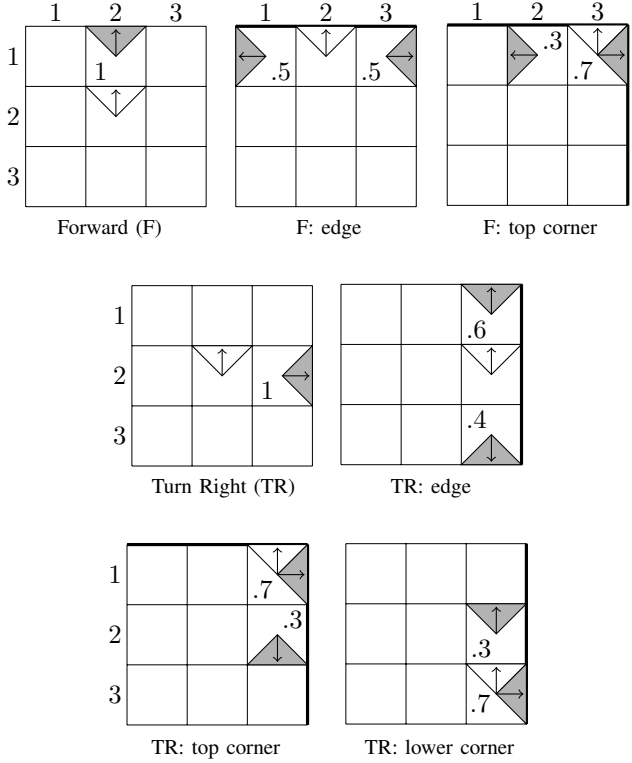
\begin{figure}[!ht]
  \captionsetup[subfigure]{labelformat=empty}
  \centering
  \subfloat[][{\hspace{4mm}\vspace{2mm} Forward (F)}]{
	\begin{tikzpicture}[inner sep=2mm, xscale = 0.8, yscale = 0.8]
	\draw[step=1cm, very thin] (0,0) grid (3,3);
		
	\draw[fill=black!0] (1.5,1.5)--(1,2)--(2,2)--cycle;
	\draw[thin] [->] (1.5, 1.5) -- (1.5,1.95);
	
	\draw[fill=black!30] (1.5,2.5)--(1,3)--(2,3)--cycle;
	\draw[thin] [->] (1.5, 2.5) -- (1.5, 2.95);
	
	\node at (1.25, 2.25) [draw=none,fill=none] {$1$};
	
	\node at (0.5, 3.23) [draw=none,fill=none] {$1$};
	\node at (1.5, 3.23) [draw=none,fill=none] {$2$};
	\node at (2.5, 3.23) [draw=none,fill=none] {$3$};
	\node at (-0.2, 2.5) [draw=none,fill=none] {$1$};
	\node at (-0.2, 0.5) [draw=none,fill=none] {$3$};
	\node at (-0.2, 1.5) [draw=none,fill=none] {$2$};
\end{tikzpicture}}~
  \subfloat[][{\hspace{2mm}F: edge}]{
	\begin{tikzpicture}[inner sep=2mm, xscale = 0.8, yscale = 0.8]
	\draw [-, very thick] (0,3) -- (3,3);
	\draw[step=1cm, very thin] (0,0) grid (3,3);
	
	\draw[fill=black!0] (1.5,2.5)--(1,3)--(2,3)--cycle;
	\draw[thin] [->] (1.5, 2.5) -- (1.5, 2.95);
	
	\draw[fill=black!30] (2.5,2.5)--(3,2)--(3,3)--cycle;
	\draw[thin] [->] (2.5, 2.5) -- (2.95, 2.5);
	
	\draw[fill=black!30] (.5,2.5)--(0,2)--(0,3)--cycle;
	\draw[thin] [->] (.5, 2.5) -- (0.05, 2.5);
	
	\node at (.75, 2.25) [draw=none,fill=none] {$.5$};
	\node at (2.25, 2.25) [draw=none,fill=none] {$.5$};
		
	\node at (0.5, 3.23) [draw=none,fill=none] {$1$};
	\node at (1.5, 3.23) [draw=none,fill=none] {$2$};
	\node at (2.5, 3.23) [draw=none,fill=none] {$3$};

\end{tikzpicture}}~
  \subfloat[][{\hspace{2mm}F: top corner}]{
	\begin{tikzpicture}[inner sep=2mm,, xscale = 0.8, yscale = 0.8]
	\draw [-, very thick] (0,3) -- (3,3);
	\draw [-, very thick] (3,3) -- (3,0);
	\draw[step=1cm, very thin] (0,0) grid (3,3);
	
	\draw[fill=black!0] (2.5,2.5)--(2,3)--(3,3)--cycle;
	\draw[thin] [->] (2.5, 2.5) -- (2.5, 2.95);

	\draw[fill=black!30] (1.5,2.5)--(1,2)--(1,3)--cycle;
	\draw[thin] [->] (1.5, 2.5) -- (1.05, 2.5);
	
	\draw[fill=black!30] (2.5,2.5)--(3,3)--(3,2)--cycle;
	\draw[thin] [->] (2.5, 2.5) -- (2.95, 2.5);
	
	\node at (1.75, 2.75) [draw=none,fill=none] {$.3$};
	\node at (2.25, 2.25) [draw=none,fill=none] {$.7$};
		
	\node at (0.5, 3.23) [draw=none,fill=none] {$1$};
	\node at (1.5, 3.23) [draw=none,fill=none] {$2$};
	\node at (2.5, 3.23) [draw=none,fill=none] {$3$};

\end{tikzpicture}}\quad
  \subfloat[][{\hspace{6mm}Turn Right (TR)}]{
	\begin{tikzpicture}[inner sep=2mm, xscale = 0.8, yscale = 0.8]
	\draw[step=1cm, very thin] (0,0) grid (3,3);
	
	\draw[fill=black!0] (1.5,1.5)--(1,2)--(2,2)--cycle;
	\draw[thin] [->] (1.5, 1.5) -- (1.5,1.95);
	
	\draw[fill=black!30] (2.5,1.5)--(3,1)--(3,2)--cycle;
	\draw[thin] [->] (2.5, 1.5) -- (2.95, 1.5);
	
	\node at (2.25, 1.25) [draw=none,fill=none] {$1$};

	\node at (-0.2, 2.5) [draw=none,fill=none] {$1$};
	\node at (-0.2, 0.5) [draw=none,fill=none] {$3$};
	\node at (-0.2, 1.5) [draw=none,fill=none] {$2$};

\end{tikzpicture}}~
  \subfloat[][{\hspace{1mm}\vspace{2mm}TR: edge}]{
	\begin{tikzpicture}[inner sep=2mm, xscale = 0.8, yscale = 0.8]
	\draw [-, very thick] (3,3) -- (3,0);
	\draw[step=1cm, very thin] (0,0) grid (3,3);
	
	\draw[fill=black!0] (2.5,1.5)--(2,2)--(3,2)--cycle;
	\draw[thin] [->] (2.5, 1.5) -- (2.5,1.95);
	
	\draw[fill=black!30] (2.5, 0.5)--(3,0)--(2,0)--cycle;
	\draw[thin] [->] (2.5, 0.5) -- (2.5,0.05);
	
	\draw[fill=black!30] (2.5,2.5)--(2,3)--(3,3)--cycle;
	\draw[thin] [->] (2.5, 2.5) -- (2.5, 2.95);
	
	\node at (2.25, 2.25) [draw=none,fill=none] {$.6$};
	\node at (2.25, .75) [draw=none,fill=none] {$.4$};

\end{tikzpicture}}\qquad
  \subfloat[][{\hspace{4mm}TR: top corner}]{
	\begin{tikzpicture}[inner sep=2mm, xscale = 0.8, yscale = 0.8]
		\draw [-, very thick] (0,3) -- (3,3);
	\draw [-, very thick] (3,3) -- (3,0);
	\draw[step=1cm, very thin] (0,0) grid (3,3);
	
	\draw[fill=black!0] (2.5,2.5)--(2,3)--(3,3)--cycle;
	\draw[thin] [->] (2.5, 2.5) -- (2.5, 2.95);
	
	\draw[fill=black!30] (2.5,2.5)--(3,3)--(3,2)--cycle;
	\draw[thin] [->] (2.5, 2.5) -- (2.95, 2.5);
	
	\draw[fill=black!30] (2.5,1.5)--(2,1)--(3,1)--cycle;
	\draw[thin] [->] (2.5, 1.5) -- (2.5, 1.05);
	
	\node at (2.25, 2.25) [draw=none,fill=none] {$.7$};
	\node at (2.75, 1.75) [draw=none,fill=none] {$.3$};
	
	\node at (-0.2, 2.5) [draw=none,fill=none] {$1$};
	\node at (-0.2, 0.5) [draw=none,fill=none] {$3$};
	\node at (-0.2, 1.5) [draw=none,fill=none] {$2$};

	\node at (2.25, 0.25) [draw=none,fill=none] {};
\end{tikzpicture}}~
  \subfloat[][{\hspace{2mm}TR: lower corner}]{
	\begin{tikzpicture}[inner sep=2mm, xscale = 0.8, yscale = 0.8]
	\draw [-, very thick] (3,3) -- (3,0);
	\draw[step=1cm, very thin] (0,0) grid (3,3);
	
	\draw[fill=black!0] (2.5,0.5)--(2,1)--(3,1)--cycle;
	\draw[thin] [->] (2.5, 0.5) -- (2.5, 0.95);
	
	\draw[fill=black!30] (2.5,0.5)--(3,1)--(3,0)--cycle;
	\draw[thin] [->] (2.5, 0.5) -- (2.95, 0.5);
	
	\draw[fill=black!30] (2.5,1.5)--(2,2)--(3,2)--cycle;
	\draw[thin] [->] (2.5, 1.5) -- (2.5, 1.95);
	
	\node at (2.25, 1.25) [draw=none,fill=none] {$.3$};
	\node at (2.23, 0.35) [draw=none,fill=none] {$.7$};

\end{tikzpicture}}
  \caption{Graphical representation of some transitions in $\mathcal{Q}'$.}
  \label{fig:dyn1}
\end{figure}

We use \cite{cvx} to solve (\ref{eqn:opt1a})-(\ref{eqn:opt1c}) and use Proposition \ref{prop1} to compute  $\mathbb{S}^{R}_{\mathbb{F}}$ and a control policy $\mathcal{K}^*$ such that $\mathbb{S}^{R}_{\mathcal{K}^*} = \mathbb{S}^{R}_{\mathbb{F}}$. The set $\mathbb{S}^{R}_{\mathbb{F}}$ can be seen in Fig. \ref{fig:simple1}, where the areas in red represent the states in $\mathbb{F}$, and the triangles in blue represent the states in $\mathbb{S}^{R}_{\mathbb{F}}$. The control $\mathcal{K}^*$, computed using (\ref{eqn:control_opt}), has been omitted due to space constraints.
\begin{figure}[t]
  \centering
 \includegraphics[trim=3cm 1cm 3cm 0cm, clip = true, width=0.35\textwidth]{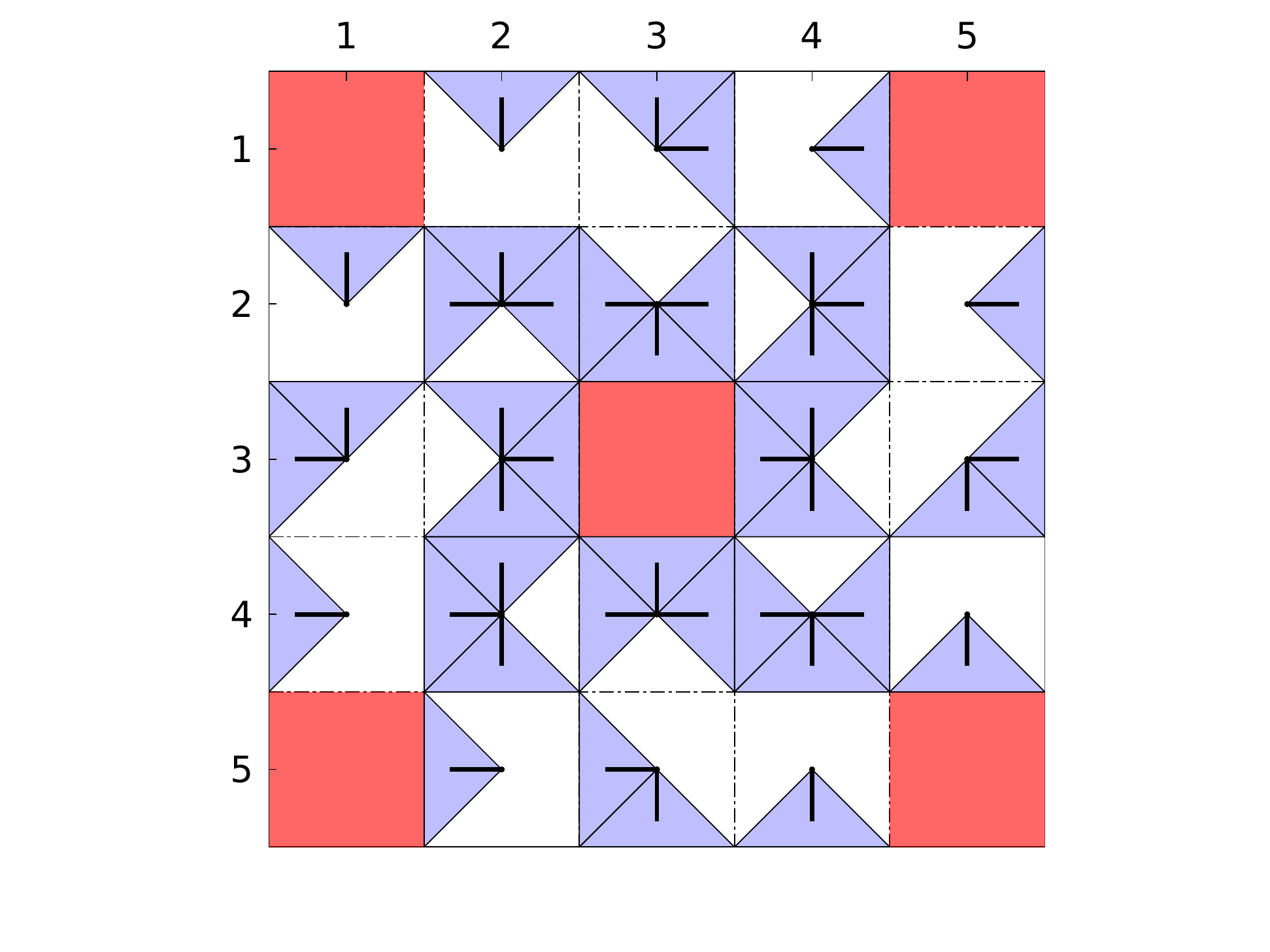}
  \caption{Depiction of $\mathbb{S}^{R}_{\mathbb{F}}$ in blue. The red areas represent the forbidden states.}
\label{fig:simple1}
\end{figure}
\end{ex}

\section{Maximal Persistent Surveillance and \\Robot Deployment}\label{sec:deploy}
In this section, we provide a solution to Problem \ref{prob2}, which seeks the minimum number $r$ of robots, a control policy $\hat{\mathcal{K}}$ and a set of initial states $\{s^1, ..., s^r\}$ so that $\bigcup_{i=1}^r\mathbb{S}^{ps}_{s^i,\hat{\mathcal{K}},\mathbb{F}} = \mathbb{S}^{R}_{\mathbb{F}}$. 

In light of a previous remark, recall that the set of $\mathbb{F}$-safe persistently surveilled states $\mathbb{S}^{ps}_{s_0,\mathcal{K},\mathbb{F}}$ is a \emph{recurrent class} of  $\mathcal{Q}_\mathcal{K}$. In practice, this means that when a robot with initial state $S_0 = s_0$ applies control policy $\mathcal{K}$, it is guaranteed that: \begin{itemize} 
\item the robot will never leave $\mathbb{S}^{ps}_{s_0,\mathcal{K},\mathbb{F}}$;
\item every state in $\mathbb{S}^{ps}_{s_0,\mathcal{K},\mathbb{F}}$ will be visited infinitely many times;
\item states in $\mathbb{F}$ will never be visited.
\end{itemize}

To find all the (safe) recurrent classes in $\mathbb{S}^{R}_{\mathcal{K},\mathbb{F}}$, flood-fill-type algorithms may be used, where the graph of $\mathcal{Q}_\mathcal{K}$ is traversed, either in a depth-first or breath-first manner. An edge from $s$ to $s^+$ of the graph of $\mathcal{Q}_{\mathcal{K}}$ exists if and only if $\mathcal{Q}_\mathcal{K}(s^+, s)>0$ holds. Note that states in $\mathbb{S} \diagdown (\mathbb{S}^{R}_{\mathcal{K},\mathbb{F}} \cup \mathbb{F})$ do not need to be searched. \\

Given $\mathbb{F}$ and a control policy $\mathcal{K}$, let $n_\mathcal{K}$ be the number of distinct recurrent classes of $\mathcal{Q}_{\mathcal{K}}$, and note that the following holds: $$\bigcup_{i=1}^{n_\mathcal{K}}\mathbb{S}^{ps}_{s^i,{\mathcal{K}},\mathbb{F}} = \mathbb{S}^{R}_{\mathcal{K},\mathbb{F}},$$ 

\noindent where $\{s^1, ..., s^{n_{\mathcal{K}}}\}$ is a set of initial states, and $\big\{\mathbb{S}^{ps}_{s^i,{\mathcal{K}},\mathbb{F}}\big\}_{i=1}^{n_\mathcal{K}}$ are distinct recurrent classes. 
\vspace{2mm}

We define the set of all admissible control policies whose $\mathbb{F}$-safe set of recurrent states are maximal to be:
$$\mathbb{K}^{R}_\mathbb{F} = \big\{ \mathcal{K} \in \mathbb{K} \text{~:~} \mathbb{S}^{R}_{\mathcal{K},\mathbb{F}} =\mathbb{S}^{R}_{\mathbb{F}}\big\},$$
and note that in order to solve Problem \ref{prob2}, we must:
\begin{itemize}
\item find a control policy $\hat{\mathcal{K}}$ in $\mathbb{K}^{R}_\mathbb{F}$ such that $n_{\hat{\mathcal{K}}}\leq n_{\mathcal{K}}$ for all $\mathcal{K}$ in $\mathbb{K}^{R}_\mathbb{F}$. Note that $n_{\hat{\mathcal{K}}}$ is the minimum number of robots needed for maximal persistent surveillance.
\item identify the recurrent classes in $\mathbb{S}^{R}_{\hat{\mathcal{K}},\mathbb{F}}$ (by exploring the graph of $\mathcal{Q}_{\hat{\mathcal{K}}}$); and 
\item select one (\emph{any}) state from each of the recurrent classes to compose the set of initial states 
\end{itemize}

Note that the control policy $\mathcal{K}^*$ given in (\ref{eqn:control_opt}) is a candidate for maximal persistent surveillance since $\mathbb{S}^{R}_{{\mathcal{K}^*},\mathbb{F}} =\mathbb{S}^{R}_{\mathbb{F}}$. The following proposition will show that $n_{\mathcal{K}^*}\leq n_{\mathcal{K}}$ for all $\mathcal{K}$ in $\mathbb{K}^{R}_\mathbb{F}$.

\begin{proposition}\label{prop2} Let $\mathbb{F}$ be given, and take $\mathcal{K}^*$ to be the control policy in (\ref{eqn:control_opt}). The following holds: $$n_{\mathcal{K}^*}\leq n_{\mathcal{K}},~~~~~~~\mathcal{K} \in \mathbb{K}^{R}_\mathbb{F}.$$
\end{proposition}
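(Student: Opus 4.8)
The plan is to show that the control policy $\mathcal{K}^*$ produces a closed-loop chain whose recurrent part is, in a precise sense, as ``connected as possible'' among all policies in $\mathbb{K}^R_\mathbb{F}$, so that it cannot have more recurrent classes than any competitor. The key structural fact to exploit is Proposition \ref{prop1}(a): the support $\mathbb{W}_{f^*_S}$ of the entropy-maximizing marginal equals $\mathbb{S}^R_\mathbb{F}$, and—by the maximal-support property of entropy maximizers (Lemma 3.5 in \cite{earvelo})—the \emph{joint} $f^*_{SU}$ also has maximal support among all feasible $f_{SU}$ for \eqref{eqn:opt1a}--\eqref{eqn:opt1c}. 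This should translate into a statement about edges of the closed-loop graph: an edge $s \to s^+$ with $s,s^+ \in \mathbb{S}^R_\mathbb{F}$ is present in the graph of $\mathcal{Q}_{\mathcal{K}^*}$ whenever it is present in the graph of $\mathcal{Q}_{\mathcal{K}}$ for \emph{some} $\mathcal{K} \in \mathbb{K}^R_\mathbb{F}$.

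First I would set up the correspondence between policies $\mathcal{K} \in \mathbb{K}^R_\mathbb{F}$ and feasible pmfs of \eqref{eqn:opt1a}--\eqref{eqn:opt1c}. Given $\mathcal{K} \in \mathbb{K}^R_\mathbb{F}$, restrict the closed-loop chain $\mathcal{Q}_\mathcal{K}$ to its recurrent set $\mathbb{S}^R_{\mathcal{K},\mathbb{F}} = \mathbb{S}^R_\mathbb{F}$ and take an invariant distribution $\pi$ supported on it (e.g., average the stationary distributions of the individual recurrent classes); then $f_{SU}(s,u) := \pi(s)\mathcal{K}(u,s)$ satisfies the flow constraint \eqref{eqn:opt1b} because $\pi$ is invariant, and \eqref{eqn:opt1c} because $\mathbb{S}^R_\mathbb{F} \cap \mathbb{F} = \emptyset$ and $\pi$ is $\mathbb{F}$-safe. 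Hence every recurrent class of every $\mathcal{K} \in \mathbb{K}^R_\mathbb{F}$ carries a feasible $f_{SU}$ whose support is contained in that class (times $\mathbb{U}$). By maximal support of $f^*_{SU}$, the support of $f^*_{SU}$ contains the union of all such supports, so the edge set of (the restriction to $\mathbb{S}^R_\mathbb{F}$ of) $\mathcal{Q}_{\mathcal{K}^*}$ contains every edge that appears inside any recurrent class of any competing $\mathcal{K}$.

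Next I would conclude by a graph-theoretic argument. Fix any $\mathcal{K} \in \mathbb{K}^R_\mathbb{F}$ with recurrent classes $C_1,\dots,C_{n_\mathcal{K}}$ partitioning $\mathbb{S}^R_\mathbb{F}$; each $C_j$ is strongly connected in the graph of $\mathcal{Q}_\mathcal{K}$, hence—by the edge-containment just established—strongly connected in the graph of $\mathcal{Q}_{\mathcal{K}^*}$ restricted to $\mathbb{S}^R_\mathbb{F}$. Since every state of $\mathbb{S}^R_\mathbb{F}$ is recurrent under $\mathcal{K}^*$ (Proposition \ref{prop1}(b) gives $\mathbb{S}^R_{\mathcal{K}^*} = \mathbb{S}^R_\mathbb{F}$), each recurrent class of $\mathcal{K}^*$ is a union of strongly connected ``blocks'', and each $C_j$ lies entirely inside a single recurrent class of $\mathcal{K}^*$. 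Thus the partition of $\mathbb{S}^R_\mathbb{F}$ into recurrent classes of $\mathcal{K}^*$ is coarser than $\{C_1,\dots,C_{n_\mathcal{K}}\}$, giving $n_{\mathcal{K}^*} \le n_\mathcal{K}$.

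The main obstacle I anticipate is making the ``maximal support $\Rightarrow$ edge containment'' step fully rigorous: one must verify that $\mathcal{Q}_{\mathcal{K}^*}(s^+,s)>0$ indeed holds whenever some feasible $f_{SU}$ puts positive mass on the pair realizing that transition, i.e., that positivity of $f^*_{SU}(s,u)$ for a $u$ with $\mathcal{Q}(s^+,s,u)>0$ forces $\mathcal{Q}_{\mathcal{K}^*}(s^+,s)>0$—this uses the definition \eqref{eqn:control_opt} of $\mathcal{K}^*$ on $\mathbb{W}_{f^*_S}$ and the maximal-support lemma applied to the \emph{joint}, not just the marginal. A secondary subtlety is ensuring the restricted graph of $\mathcal{Q}_{\mathcal{K}^*}$ has no ``leaks'' out of $\mathbb{S}^R_\mathbb{F}$ that could merge classes in an unintended way; but this is exactly ruled out by $\mathbb{S}^R_{\mathcal{K}^*} = \mathbb{S}^R_\mathbb{F}$ together with Remark \ref{rem1}.
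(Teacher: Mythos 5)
Your proposal is correct and rests on exactly the same two pillars as the paper's own proof: the correspondence between a policy in $\mathbb{K}^{R}_{\mathbb{F}}$ and a feasible pmf for (\ref{eqn:opt1a})--(\ref{eqn:opt1c}) built from an invariant distribution, and the maximal-support property of the entropy maximizer (Lemma 3.5 of \cite{earvelo}). The only difference is organizational: the paper argues by contradiction and asserts without detail the two steps you make explicit (that a competing policy $\bar{\mathcal{K}}$ yields a feasible $\tilde{f}_{SU}$ with matching sparsity, and that support containment forces each recurrent class of $\bar{\mathcal{K}}$ to sit inside a single recurrent class of $\mathcal{K}^*$, coarsening the partition), so your direct version is essentially the paper's argument unwound and filled in.
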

\begin{proof}
Suppose there exists a control policy $\bar{\mathcal{K}}$ in $\mathbb{K}^{R}_\mathbb{F}$ such that $n_{\bar{\mathcal{K}}} < n_{\mathcal{K}^*}$. Since $\bar{\mathcal{K}}$ belongs to $\mathbb{K}^{R}_\mathbb{F}$, there must exist a control policy $\tilde{\mathcal{K}}$ with the same sparsity pattern as $\bar{\mathcal{K}}$ and a pmf $\tilde{f}_{SU}$ in $\mathbb{P}_{\mathbb{SU}}$ that satisfies (\ref{eqn:opt1b}) and (\ref{eqn:opt1c}) for which:
\begin{align*}
		\!\!\!\!\!\!\!\! \tilde{\mathcal{K}}(u,s) =
		\begin{cases}
		 	\frac{\tilde{f}_{SU}(s,u)}{\tilde{f_S}(s)}, & \!s \in \mathbb{S}^{R}_{\mathbb{F}} \\
			\mathcal{G}(u,s),\! &\text{otherwise}
		\end{cases}, \quad (u,s) \in \mathbb{U}\times\mathbb{S}
	\end{align*}
where $\tilde{f}_S(s) = \sum_{u \in \mathbb{U}} \tilde{f}_{SU}(s,u)$ and $\mathcal{G}: \mathbb{U} \times \mathbb{S} \rightarrow [0,1]$.

Since $\mathcal{Q}_{\bar{\mathcal{K}}}$ has fewer recurrent classes than $\mathcal{Q}_{\mathcal{K}^*}$, there must exist a pair $(s,u)$ in $\mathbb{S}^{R}_{\mathbb{F}} \times\mathbb{U}$ for which $\bar{\mathcal{K}}(s,u)>0$ and $\mathcal{K}^*(s,u)=0$ holds. Since $\bar{\mathcal{K}}$ and $\tilde{\mathcal{K}}$ have the same sparsity pattern, it holds that $\tilde{\mathcal{K}}(s,u)>0$. Therefore, it must be that $\tilde{f}_{SU}(s,u)>0$ and $f^*_{SU}(s,u)=0$. In other words, the support of $\tilde{f}_{SU}$ is not contained in the support of  $f^*_{SU}$, which is a contradiction by Lemma 3.5 in \cite{earvelo}. 
\end{proof}

\begin{rem}
Suppose we change the objective function in (\ref{eqn:opt1a}) to $\mathcal{H}(f_{S})$ and add the following constraint: $f_S(s) = \sum_{u \in \mathbb{U}}f_{SU}(u, s)$. Note that an appropriate modification of Proposition~\ref{prop1} would enable us to find $\mathbb{S}^{R}_{\mathbb{F}}$ and an associated control policy (i.e., solve Problem \ref{prop1}), with the added benefit that the modified convex program would be computationally less intensive (since fewer calls to the entropy function would be required). However, maximizing the entropy of the marginal distribution $f_{S}$ would not solve the problem of maximal persistent surveillance since Proposition \ref{prop2} would not apply.
\end{rem}
\vspace{2mm}

\begin{ex}\label{example2}
Consider again the example described in Example \ref{example1}. By exploring the graph of $\mathcal{Q}_{\mathcal{K}^*}$, we conclude that only one robot is required to perform maximal persistent surveillance (i.e., $\mathbb{S}^{R}_\mathbb{F}$ contains only one recurrent class). Any state in $\mathbb{S}^{R}_\mathbb{F}$ may be selected as the robot's initial state.

Suppose that we now change the set of forbidden states to include location $(4,3)$ (i.e., let $\mathbb{F} = \big\{(x,y,\theta)\in \mathbb{S} \text{~:~} (x,y)  \in \{(1,1), (1,5), (5,1), (5,5), (3,3), (4,3)\}\big\}$). Re-solving (\ref{eqn:opt1a})-(\ref{eqn:opt1c}), applying Propositions \ref{prop1} and \ref{prop2}, and searching the graph of the closed loop Markov chain, we conclude that at least three robots are required to perform maximal persistent surveillance of $\mathbb{S}^{R}_\mathbb{F}$ (see Fig. \ref{fig:simple3}). Any state from each recurrent class may be used as initial states, so we can chose the set of initial states to be: $\big\{(1,2,U), (2,1,U), (2,4,U)\big\}$. Note that the set $\mathbb{S}^{R}_\mathbb{F}$ is now smaller than in the previous example ($34$ vs. $40$ states).

\begin{figure}[!tb]
  \captionsetup[subfigure]{labelformat=empty}
  \centering
  \subfloat[][{\hspace{4mm} $\mathbb{S}^{R}_{\mathbb{F}}$ }]{ 
 \includegraphics[trim=3.6cm .9cm 3.35cm 0cm, clip = true, width=0.238\textwidth]{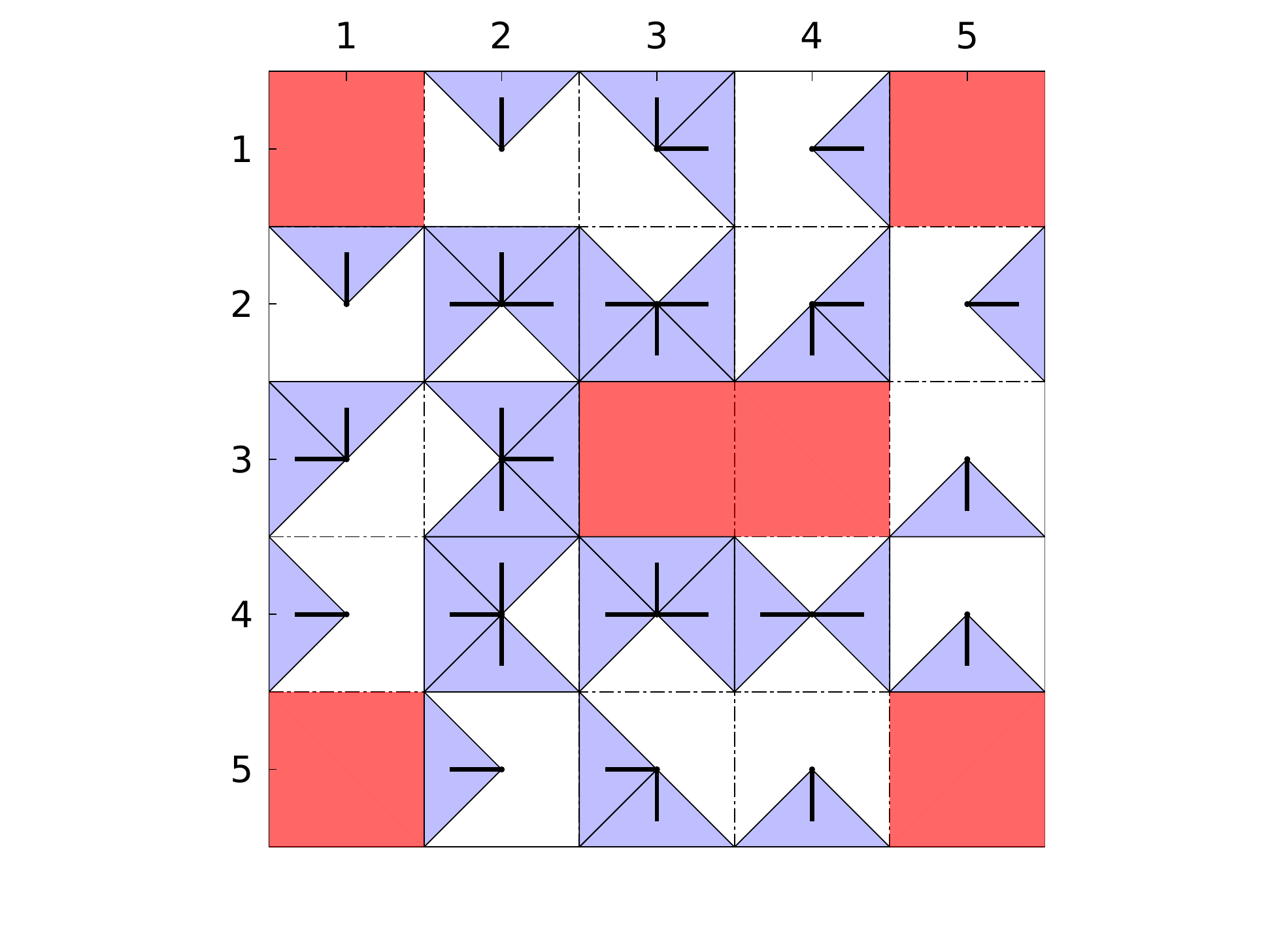}
}
  \subfloat[][{\hspace{4mm}$\mathbb{S}^{ps}_{s^1, \mathcal{K}^*,\mathbb{F}},~~s^1 = (1,2,U)$}]{ 
 \includegraphics[trim=4.1cm .9cm 3.35cm 0cm, clip = true, width=0.23\textwidth]{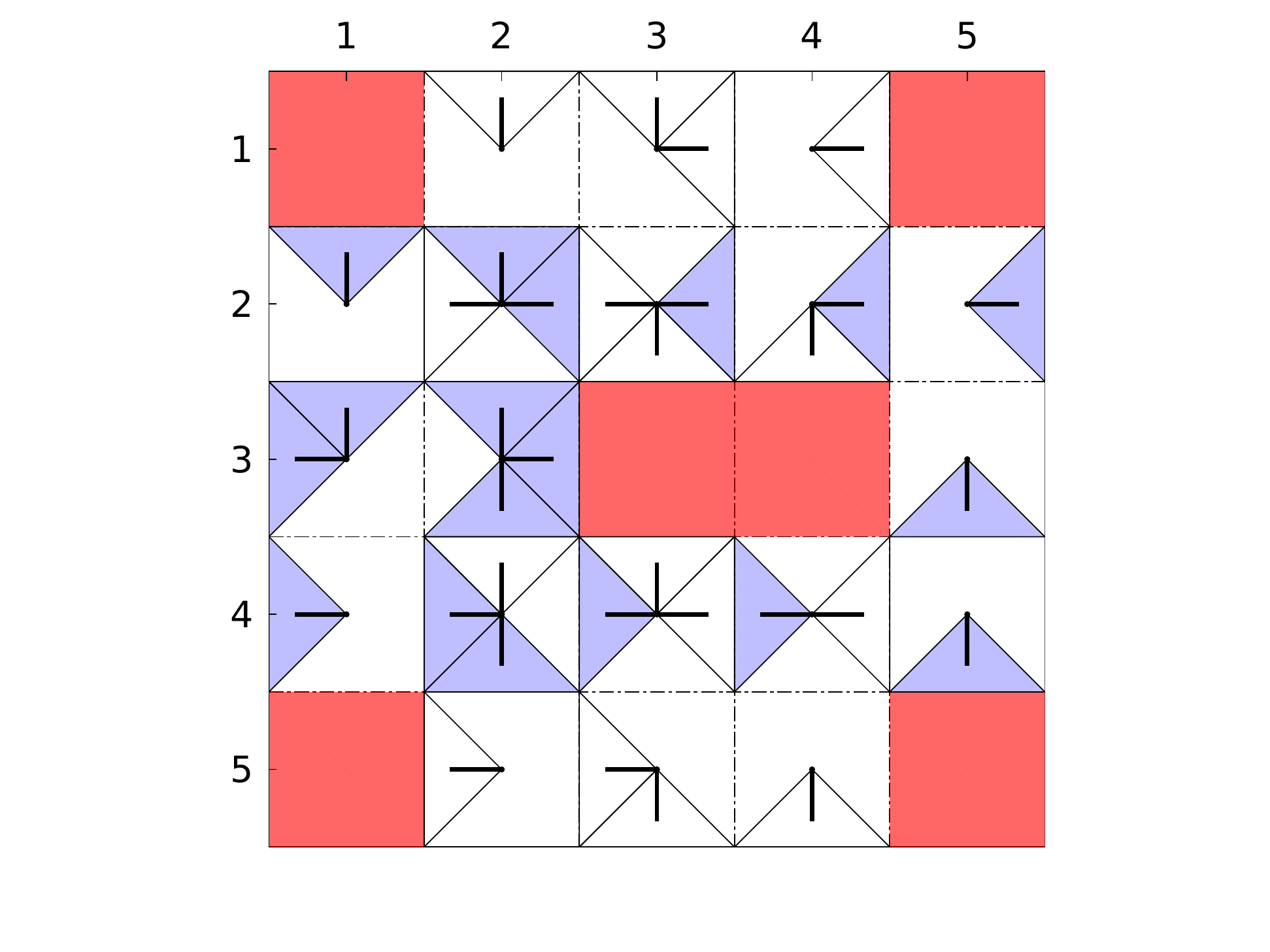}
}\qquad
  \subfloat[][{\hspace{4mm} $\mathbb{S}^{ps}_{s^2, \mathcal{K}^*,\mathbb{F}},~~s^2 = (2,1,U)$ }]{ 
 \includegraphics[trim=3.6cm .9cm 3.35cm 0cm, clip = true, width=0.238\textwidth]{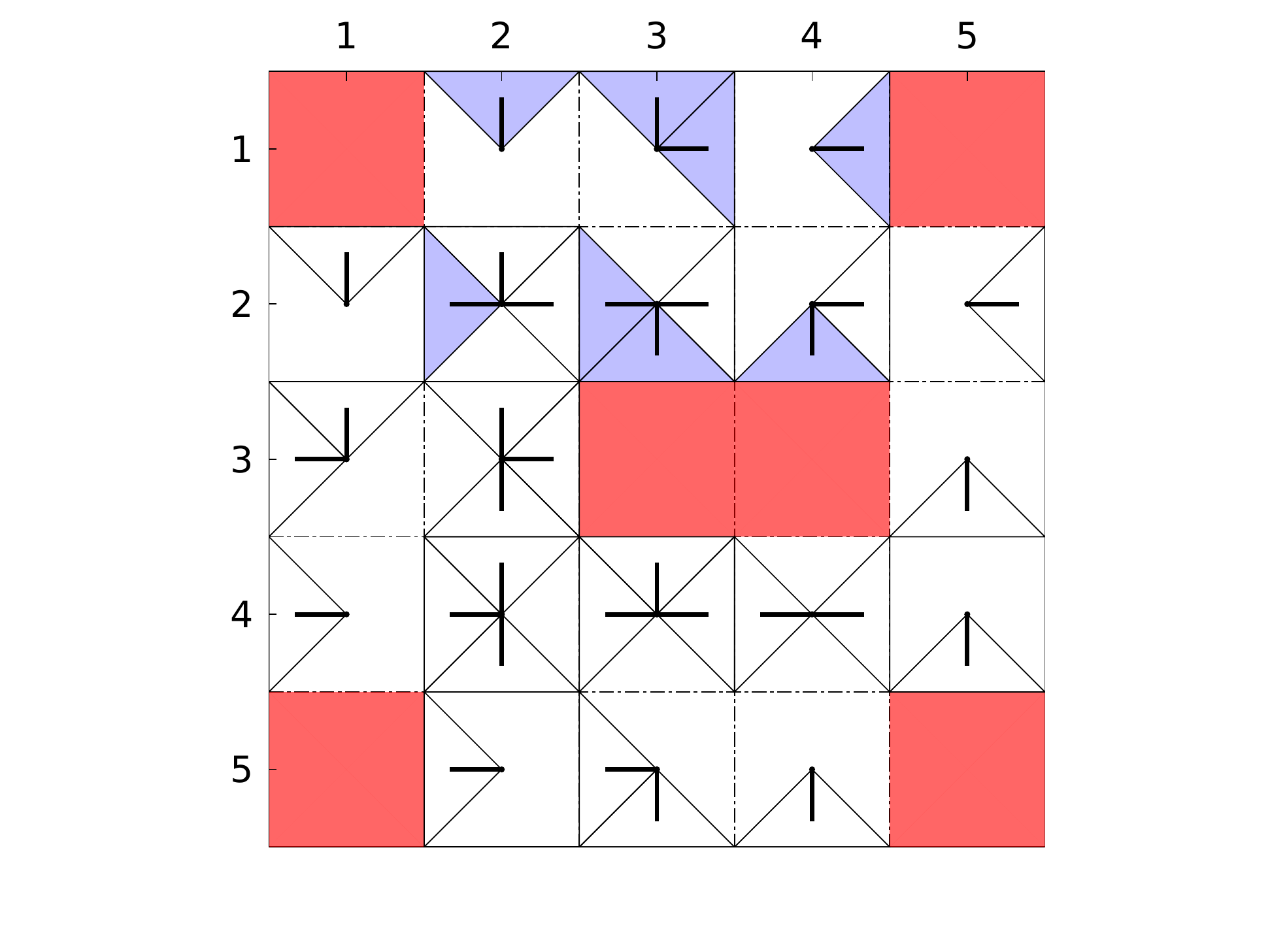}
}
  \subfloat[][{\hspace{4mm}\ $\mathbb{S}^{ps}_{s^2, \mathcal{K}^*,\mathbb{F}},~~s^3 = (2,4,U)$}]{ 
 \includegraphics[trim=4.1cm .9cm 3.35cm 0cm, clip = true, width=0.23\textwidth]{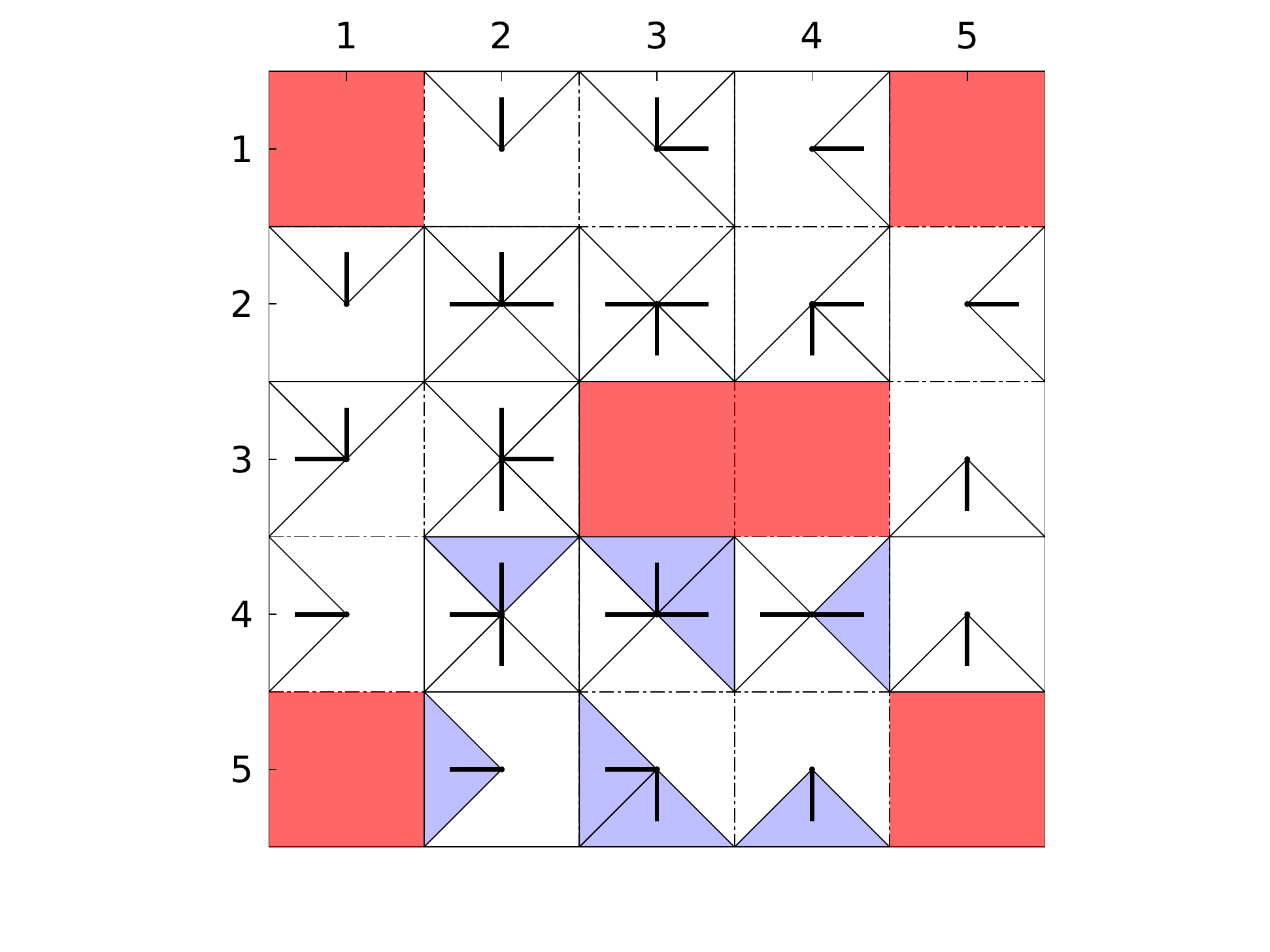}
}
  \caption{Top left: maximal set of recurrent states $\mathbb{S}^{R}_{\mathbb{F}}$ (in blue). Others: three recurrent classes whose union is $\mathbb{S}^{R}_{\mathbb{F}}$.}
  \label{fig:simple3}
\end{figure}

\end{ex}

\section{Limiting Behavior and Other Constraints}\label{sec:limiting}
We define $\mathcal{T}_\mathcal{K}$, the long term proportion of time the robot, under control policy $\mathcal{K}$, visits state $s$ in $\mathbb{S}$ having started at state $s_0$, to be: $$\mathcal{T}_\mathcal{K}(s,s_0) \overset{def}{=} \lim_{k\rightarrow \infty} \frac{1}{k}\sum_{i =1}^k\mathcal{I}\big(S_i=s, S_0 = s_0\big),$$
where $\mathcal{I}$ is the indicator function. 

\subsection{Limiting Behavior with One Recurrent Class}
Given a forbidden set $\mathbb{F}$, and let $f_{SU}^*$ be the optimal solution to (\ref{eqn:opt1a})-(\ref{eqn:opt1c}), and $\mathcal{K}^*$ be the control policy computed in (\ref{eqn:control_opt}), and suppose $\mathbb{S}^R_{\mathcal{K}^*,\mathbb{F}}$ has only one recurrent class. For any initial state $s_0$ in $\mathbb{S}^R_{\mathbb{F}}$, the following holds with probability one:
\begin{align}\label{eqn:limit}
\mathcal{T}_{\mathcal{K}^*}(s,s_0) = f^*_S(s),
\end{align}
were ${f}^*_S(s) = \sum_{u \in \mathbb{U}} {f}^*_{SU}(s,u)$. Since we have not imposed aperiodicity on $\mathcal{Q}_{\mathcal{K}^*}$, we cannot state stronger convergence. However, equation (\ref{eqn:limit}) still tells us valuable information regarding the limiting behavior of the robot.
\vspace{2mm}

Note that the pmf that maximizes the entropy is ``as uniform as possible" (in fact, when unconstrained, the pmf that maximizes the entropy is uniform.). However, additional convex constraints can be added to our formulation in order to shape the distribution of the optimal pmf and, thus, influence the limiting behavior of the robot. 

Consider the following constraint:
\begin{align}\label{eqn:newconst}
\sum_{(x,y)\in\mathbb{D},~\theta \in \mathbb{O},~u \in \mathbb{U}} f_{SU}((x,y,\theta),u)>\alpha,
\end{align}
where $\mathbb{D}\subset\mathbb{X}\times\mathbb{Y}$ is a region of the lattice. The set $\mathbb{D}$ can be interpreted as a region of high interest that should be surveilled more often. Suppose the convex program (\ref{eqn:opt1a})-(\ref{eqn:opt1c}) and (\ref{eqn:newconst}) is feasible, that $f^{**}_{SU}$ is the optimal solution and $\mathcal{K}^{**}$ is the associated control policy. The following holds for any $s_0$ in $\mathbb{S}^R_{\mathbb{F}}$ with probability one:
\begin{align*}
\sum_{(x,y) \in \mathbb{D}, ~\theta \in \mathbb{O}} \mathcal{T}_{\mathcal{K}^{**}}\big((x,y,\theta), s_0\big) > \alpha.
\end{align*}

\vspace{2mm}

\begin{ex}\label{example3}
Let $\mathbb{X} = \mathbb{Y}  = \{1,...,10\}$, $\mathbb{O} =  \{R, U, L, D\}$, and consider again a robot whose action space is given by $\mathbb{U} = \{``Forward", ``Turn~Right"\}$. The dynamics $\mathcal{Q}$ are similar to what was used in Examples \ref{example1} and \ref{example2}, except that we add uncertainty to the transition of states that lie in the interior of the grid (see Fig. \ref{fig:dyn2}). The probabilities for states on the edge of the grid are the same as before (see Fig.\ref{fig:dyn1}).

\begin{figure}[!htb]
  \captionsetup[subfigure]{labelformat=empty}
  \centering
  \subfloat[][{\hspace{4mm}\vspace{2mm} Forward}]{
	\begin{tikzpicture}[inner sep=2mm, xscale = 0.8, yscale = 0.8]
	\draw[step=1cm, very thin] (0,0) grid (3,3);
		
	\draw[fill=black!0] (1.5,1.5)--(1,2)--(2,2)--cycle;
	\draw[thin] [->] (1.5, 1.5) -- (1.5,1.95);
	
	\draw[fill=black!30] (1.5,2.5)--(1,3)--(2,3)--cycle;
	\draw[thin] [->] (1.5, 2.5) -- (1.5, 2.95);
	
	\draw[fill=black!30] (1.5,2.5)--(1,3)--(1,2)--cycle;
	\draw[thin] [->] (1.5, 2.5) -- (1.05, 2.5);
	
	\draw[fill=black!30] (1.5,2.5)--(2,2)--(2,3)--cycle;
	\draw[thin] [->] (1.5, 2.5) -- (1.95, 2.5);
	
	
		\node at (.75, 2.25) [draw=none,fill=none] {$.2$};
		\node at (2.25, 2.25) [draw=none,fill=none] {$.2$};
	
	\node at (0.5, 3.23) [draw=none,fill=none] {$1$};
	\node at (1.5, 3.23) [draw=none,fill=none] {$2$};
	\node at (2.5, 3.23) [draw=none,fill=none] {$3$};
	\node at (-0.2, 2.5) [draw=none,fill=none] {$1$};
	\node at (-0.2, 0.5) [draw=none,fill=none] {$3$};
	\node at (-0.2, 1.5) [draw=none,fill=none] {$2$};
\end{tikzpicture}}~
    \subfloat[][{\hspace{3mm}Turn Right}]{
	\begin{tikzpicture}[inner sep=2mm, xscale = 0.8, yscale = 0.8]

	\draw[step=1cm, very thin] (0,0) grid (3,3);
	
	\draw[fill=black!0] (1.5,1.5)--(1,2)--(2,2)--cycle;
	\draw[thin] [->] (1.5, 1.5) -- (1.5,1.95);
	
	\draw[fill=black!30] (2.5,1.5)--(3,1)--(3,2)--cycle;

	\draw[thin] [->] (2.5, 1.5) -- (2.95, 1.5);
	
	\draw[fill=black!30] (2.5,2.5)--(3,2)--(3,3)--cycle;

	\draw[thin] [->] (2.5, 2.5) -- (2.95, 2.5);

	\node at (2.25, 1.25) [draw=none,fill=none] {$.7$};
		\node at (2.25, 2.25) [draw=none,fill=none] {$.3$};

	\node at (0.5, 3.23) [draw=none,fill=none] {$1$};
	
	\node at (1.5, 3.23) [draw=none,fill=none] {$2$};
	\node at (2.5, 3.23) [draw=none,fill=none] {$3$};

\end{tikzpicture}}~
  \caption{Graphical representation of some transitions in $\mathcal{Q}'$.}
  \label{fig:dyn2}
\end{figure}
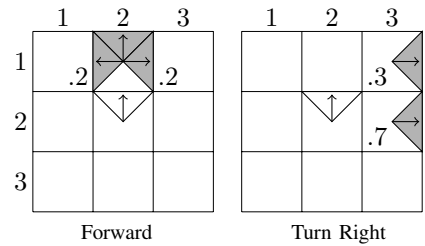

Consider the set of forbidden state $\mathbb{F}~ = \big\{(x,y,\theta) ~\in ~\mathbb{S} \text{~:} \\(x,y) \! \in \! \{(2,2), \!(2,3),\! (3,2),\! (3,3),\! (8,8),\! (8,9),\! (9,8), \!(9,9)\}\big\}$. We solve (\ref{eqn:opt1a})-(\ref{eqn:opt1c}) using the tool in \cite{cvx}. In Fig. \ref{fig:limit1}, each state that belongs in $\mathbb{S}^R_{\mathbb{F}}$ is shown in blue, where the darker the blue, the higher the value of $f^*_S$. Note that the distribution is relatively uniform.

\begin{figure}[ht!]
  \centering
 \includegraphics[trim=3cm 1cm 3cm 0cm, clip = true, width=0.365\textwidth]{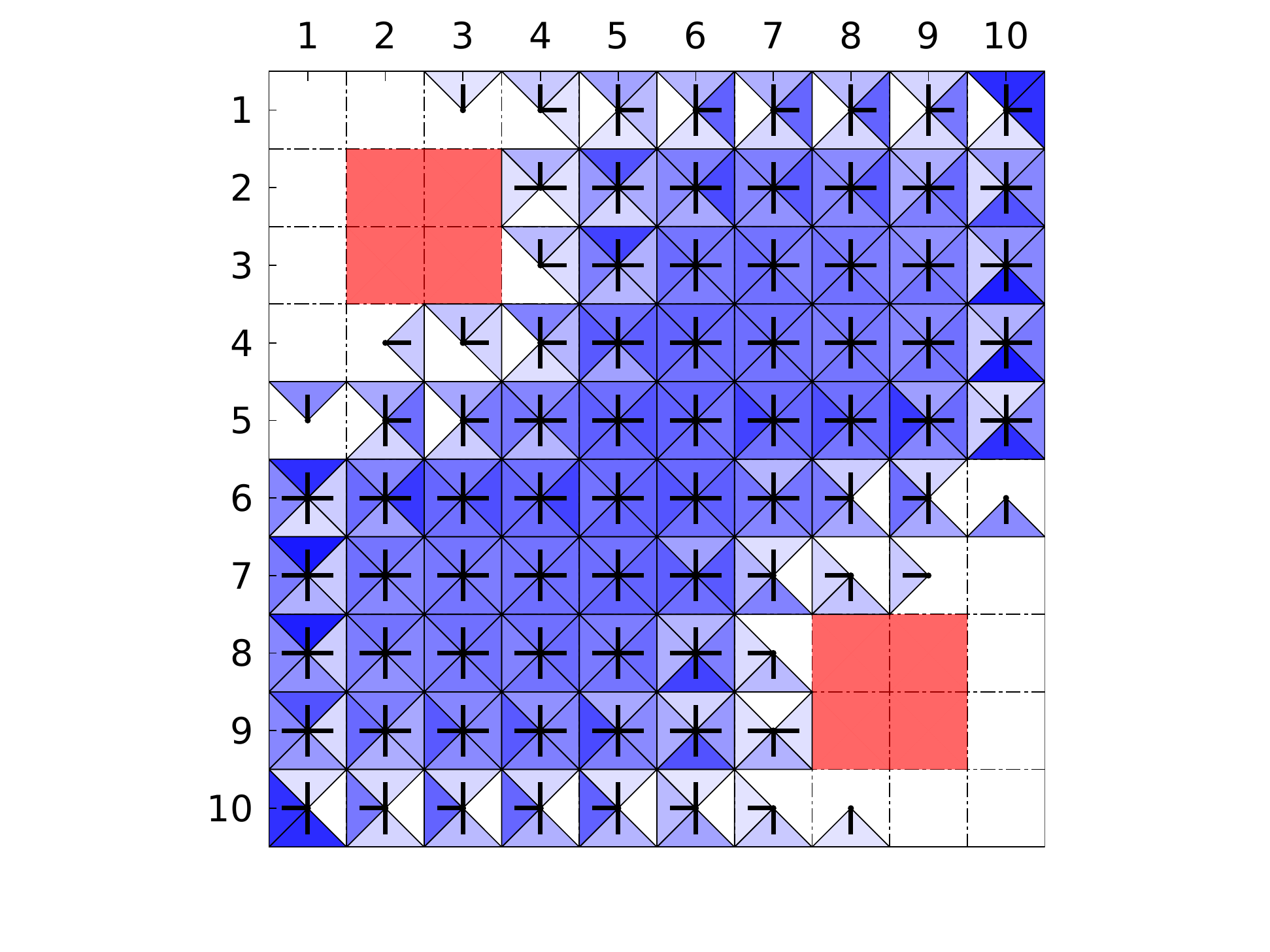}
\caption{Depiction of $\mathbb{S}^{R}_{\mathbb{F}}$ in blue. Darker blue indicates a higher value for $f^*_S$.}
\label{fig:limit1}
\end{figure}

Consider now $\mathbb{D} = \big\{(x,y) ~\in ~\mathbb{X}\times\mathbb{Y} \text{~:~} 3\leq x,y \leq 8\big\}$, and let $\alpha = 0.75$. We solve (\ref{eqn:opt1a})-(\ref{eqn:opt1c}) and (\ref{eqn:newconst}). The result can be seen in Fig. \ref{fig:limit2}.

\begin{figure}[t!]
  \centering
 \includegraphics[trim=3cm 1cm 3cm 0cm, clip = true, width=0.365\textwidth]{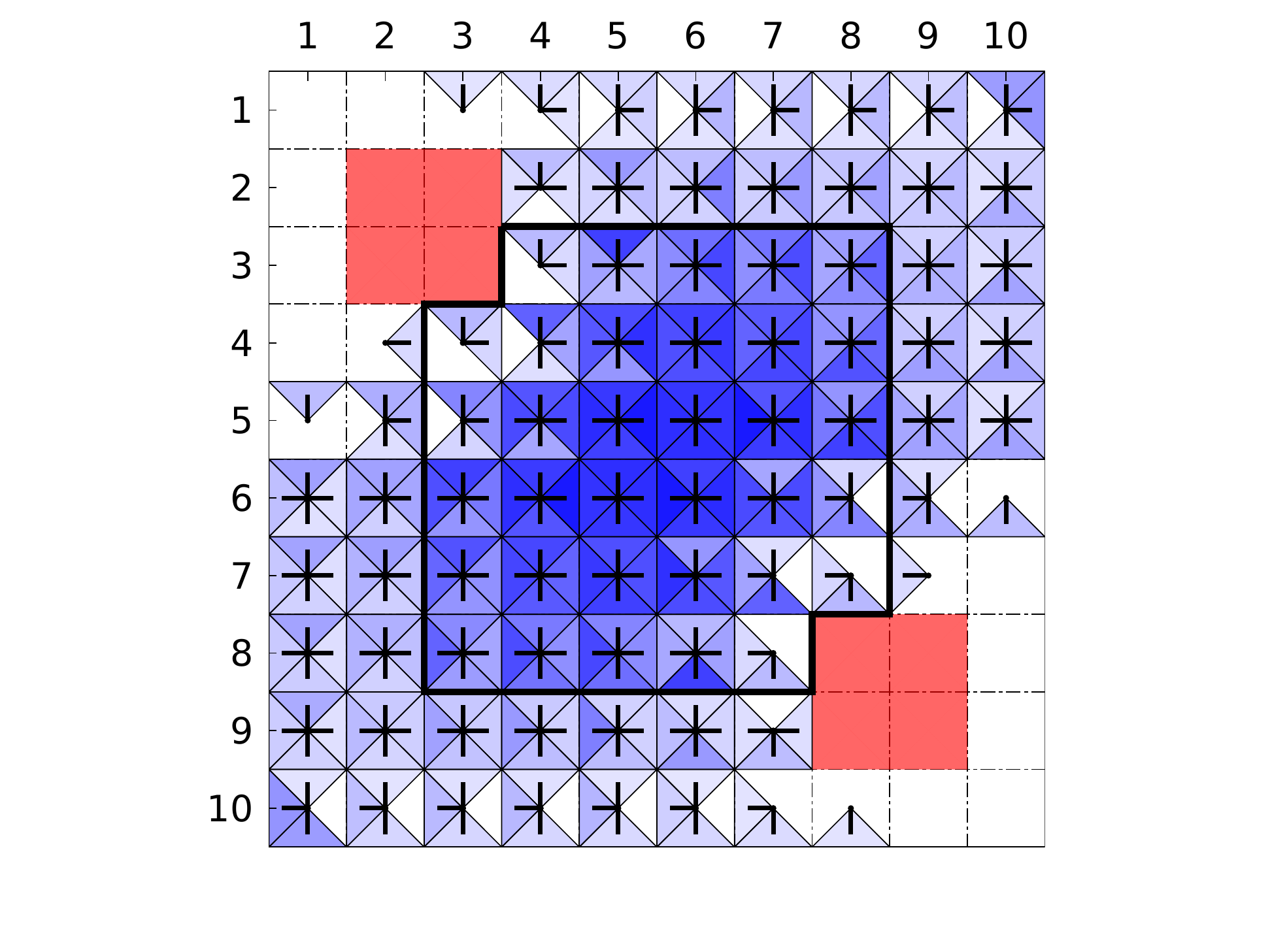}
\caption{Depiction of $\mathbb{S}^{R}_{\mathbb{F}}$ in blue with additional constraint (\ref{eqn:newconst}). Darker blue indicates a higher value for $f^{**}_S$.}
\label{fig:limit2}
\end{figure}
\end{ex}

\subsection{Limiting Behavior with Multiple Recurrent Classes}

Consider again $f_{SU}^*$  and $\mathcal{K}^*$ as before, and, without loss of generality, let $\mathbb{S}^R_{\mathcal{K}^*,\mathbb{F}}$ have two recurrent classes with initial states $s^1$ and $s^2$ \big(i.e., $\mathbb{S}^{ps}_{s^1,\mathcal{K}^*,\mathbb{F}} \cup \mathbb{S}^{ps}_{s^2,\mathcal{K}^*,\mathbb{F}} = \mathbb{S}^{R}_{\mathcal{K}^*,\mathbb{F}}$\big). For any initial state $s_0$ in $\mathbb{S}^{ps}_{s^1,\mathcal{K}^*,\mathbb{F}}$ \big(equiv., $\mathbb{S}^{ps}_{s^2,\mathcal{K}^*,\mathbb{F}}\big)$, the following holds with probability one:
\begin{align}\label{eqn:limit2}
\mathcal{T}_{\mathcal{K}^*}(s,s_0) = \frac{f^*_S(s)}{\beta},
\end{align}
where $\beta \!=\! \sum_{s \in \mathbb{S}^{ps}_{s^1,\mathcal{K}^*,\mathbb{F}}}f^*_S(s)$ \big(equiv. $\beta \!=\! \sum_{s \in \mathbb{S}^{ps}_{s^2,\mathcal{K}^*,\mathbb{F}}}f^*_S(s)$\big). 
\vspace{.01mm}

With equation (\ref{eqn:limit2}) in mind, note that additional convex constraints may also be used to influence the limiting behavior of the robots. Moreover, by carefully selecting the number of robots allocated to each recurrent class, one can achieve a desirable limiting behavior for the ensemble of robots.

\vspace{1mm}
\section{Conclusions}\label{sec:conc}
We have proposed methods to design memoryless strategies for controlled Markov chains that guarantee maximal persistent surveillance properties under safety constraints. The uncomplicated structure of the resulting controllers makes them implementable in small robots. We have described a finitely parametrized convex program that solves this problem via entropy maximization principles, and we show that the computed control policy results in the closed loop Markov chain with the least number of recurrent classes. 

\section*{Acknowledgment}
This work was supported by NSF Grant CNS  0931878,  AFOSR Grant FA95501110182, ONR UMD-AppEl Center and the Multiscale Systems Center, one of six research centers funded under the Focus Center Research Program.

\nocite{HernandezLerma:1996ti}
\nocite{Puterman:1994vo}
\nocite{Wolfe:1962ua}
\nocite{Borkar:1990cz}
\nocite{Fox:1966ut}

\bibliographystyle{plain}

\bibliography{ref}  

\end{document}